\documentclass[12pt]{article}

\usepackage{amsmath}
\usepackage{graphicx,psfrag,epsf}
\usepackage{enumerate}
\usepackage{psfrag,epsf,tikz}
\usepackage{url} 
\usepackage{amsmath,amssymb,amsfonts,amsthm,mathtools,mathrsfs}

\usepackage{algorithm}
\usepackage{algpseudocode}



\usepackage{bm,bbm}
\usepackage{color}
\usepackage{subfigure}
\usepackage{algorithm,algpseudocode}
\usepackage[symbol]{footmisc}
\usepackage{scalefnt}
\usepackage{authblk} 
\usepackage{multirow,centernot}
\usepackage[colorlinks=true, citecolor=blue, urlcolor=blue]{hyperref}
\sloppy

\setcounter{secnumdepth}{4}

\usepackage{natbib}
\usepackage{dsfont}
\usepackage[title]{appendix}
\urlstyle{same}

\input cyracc.def


\usepackage[left=1in,top=1.1in,right=0.5in,bottom=1in]{geometry}

\theoremstyle{definition}

\newtheorem{theorem}{Theorem}[section]

\newtheorem{corollary}[theorem]{Corollary}
\newtheorem{example}{Example}[section]
\newtheorem{definition}{Definition}[section]
\newtheorem{lemma}[theorem]{Lemma}
\newtheorem{proposition}[theorem]{Proposition}
\newtheorem{remark}[theorem]{Remark}

\makeatletter
\def\@seccntformat#1{\@ifundefined{#1@cntformat}%
	{\csname the#1\endcsname\quad}
	{\csname #1@cntformat\endcsname}
}
\makeatother

\markright{{\scriptsize RWprewetting-13; version from \today
}}
\newif\ifShowComments
\ShowCommentstrue
\def\strutdepth{\dp\strutbox}
\def\druk#1{\strut\vadjust{\kern-\strutdepth
        {\vtop to \strutdepth{%
                \baselineskip\strutdepth\vss
                        \llap{\hbox{#1}\quad}\null}}}}




\title{\bf
%
%
Bias in Gini coefficient estimation for gamma mixture populations
}

\author{
\text{Roberto Vila}$^{1}$
\,\,\,and
\text{Helton Saulo}$^{1,2}$\thanks{Corresponding author: H. Saulo, email: {heltonsaulo@gmail.com}
\newline
}
\\
{\small $^{1}$ Department of Statistics, University of Brasilia, Brasilia, Brazil}\\
{\small $^{2}$ Department of Economics, Federal University of Pelotas, Pelotas, Brazil}\\
}
\setcounter{Maxaffil}{0}


\begin{document}
	\maketitle 	
	\begin{abstract}
This paper examines the properties of the Gini coefficient estimator for gamma mixture populations and reveals the presence of bias. In contrast, we show that sampling from a gamma distribution yields an unbiased estimator, consistent with prior research \citep{Baydil2025}. We derive an explicit bias expression for the Gini coefficient in gamma mixture populations, which serves as the foundation for proposing a bias-corrected Gini estimator. We conduct a Monte Carlo simulation study to evaluate the behavior of the bias-corrected Gini estimator.
	\end{abstract}
	\smallskip
	\noindent
	{\small {\bfseries Keywords.} {Gamma mixture distribution, Gini coefficient estimator, biased estimator.}}
	\\
	{\small{\bfseries Mathematics Subject Classification (2010).} {MSC 60E05 $\cdot$ MSC 62Exx $\cdot$ MSC 62Fxx.}}
%
	
\section{Introduction}
	\noindent

The Gini coefficient is a widely used measure of income inequality and dispersion in a population \citep{Dorfman1979}.  The degree of inequality within a distribution can be quantified using this method, which has applications in a variety of fields, including economics, finance, ecology, health, and environment, among others; see, for example, \cite{Damgaard2000}, \cite{YAO01101999}, \cite{SUN2010601}, and \cite{Kharazmi2023}. According to \cite{Baydil2025}, the Gini coefficient is commonly utilized by the World Bank in order to evaluate the degree of economic disparity that exists across countries. This highlights the practical significance of the Gini coefficient.

Mathematically, the Gini coefficient is commonly defined in terms of the expected absolute differences between two independent, identically distributed (i.i.d.) random variables. A widely used estimator for this index is the sample Gini coefficient; see \cite{Deltas2003}. However, this estimator tends to be biased downward, particularly in small samples and across various distributions, including uniform, log-normal, and exponential models. To address this issue, \cite{Deltas2003} introduced the upward-adjusted sample Gini coefficient, which was shown to be unbiased when the population followed an exponential distribution. Recently, \cite{Baydil2025} extended the unbiasedness of the upward-adjusted sample Gini coefficient when the population follows a gamma distribution.

The gamma distribution is particularly appealing in economic modeling due to its flexibility in capturing mid-range income distributions, as opposed to the log-normal model, which is often used for high-income distributions; see \cite{Salem1974}. \cite{McDonald1979} derived the population Gini coefficient for a gamma-distributed population and explored its estimation via maximum likelihood and moment-based methods. Nevertheless, a single gamma distribution for modeling income may not be adequate as real-world income data is often heterogeneous, consisting of multiple economic classes (e.g., low-, middle-, and high-income earners). In contrast, a gamma mixture model can be a more flexible parametric model for an income distribution as studied by \cite{Chotikapanich2008}. In fact, mixtures offer the benefit of a flexible functional structure while maintaining the convenience of parametric models that facilitate statistical inference; see \cite{Chotikapanich2008}.

In this paper, we examine the estimation of the Gini coefficient in gamma mixture populations, a more general class of distributions that accounts for heterogeneity in income data. While the sample Gini coefficient is known to be an unbiased estimator for the population Gini coefficient in single gamma distributions \citep{Baydil2025}, we show that this property does not hold for gamma mixture models, leading to a systematic bias in estimation. Then, we derive an explicit expression for the bias and propose an unbiased Gini estimator when the population is gamma mixture distributed.

The rest of this paper unfolds as follows. In Section \ref{sec:02}, we present the theoretical foundations and key definitions. In Section \ref{sec:03}, we derive a closed-form expression for the expectation of the sample Gini coefficient estimator. In Section \ref{sec:04}, we demonstrate the bias introduced when estimating the Gini coefficient from gamma mixture distributions, providing both theoretical and empirical evidence. In Section \ref{sec:05}, we explore an Illustrative Monte Carlo simulation study. Finally, in Section \ref{sec:06}, we provide some concluding remarks.



\section{Preliminary results and some definitions}	
\label{sec:02}


This section provides the theoretical foundation for our analysis, introducing preliminary results and definitions. A key result, Proposition \ref{corollary-2}, presents an explicit formula for the Gini coefficient, a widely used statistical measure of income inequality \citep{Gini1936}.




The theoretical results presented in this section, pertaining to (almost surely) positive random variables, exhibit universal validity and applicability, transcending specific distributional forms and accommodating real-valued random variables with diverse support structures.

\begin{lemma}\label{main-theorem}
	Let $X_1$ and $X_2$ be two independent copies of a
(almost surely)	positive random variable $X$ with finite integral and common cumulative distribution function $F$ and let $g$ be a positive  real-valued integrable function of two (positive) variables. Then, the following identity holds:
	\begin{align*}
		\mathbb{E}[ \vert X_1-X_2\vert g(X_1,X_2)]
		=
		2\mathbb{E}(X)\left\{
		\mathbb{E}\left[g(X^*,X) \mathds{1}_{\{X<X^*\}}\right]
		-
		\mathbb{E}\left[g(X,X^*)\mathds{1}_{\{X^*<X\}}\right]	
		\right\},
	\end{align*}
	where $\mathds{1}_A$ is the indicator function of an event $A$,
	$X^*$ is independent of $X$ and has length-biased distribution, that is, its cumulative distribution function is given by the following Lebesgue-Stieltjes integral:
	\begin{align}\label{de-cdf-X*}
		F_{X^*}(u)
		=
		{1\over \mathbb{E}(X)}\, 
		{\int_{0}^u t{\rm d}F(t)},
		\quad u>0.
	\end{align}
\end{lemma}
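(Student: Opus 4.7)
The strategy is to reduce the computation to the length-biased distribution via the identity $\mathbb{E}[X\,\phi(X)] = \mathbb{E}(X)\,\mathbb{E}[\phi(X^*)]$, which is an immediate consequence of the definition \eqref{de-cdf-X*}, applied in its two-variable form.

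\textbf{Step 1 (decomposition).} I would begin by writing
\begin{align*}
|X_1-X_2|\,g(X_1,X_2)
&= (X_1-X_2)\,g(X_1,X_2)\,\mathds{1}_{\{X_1>X_2\}}\\
&\quad + (X_2-X_1)\,g(X_1,X_2)\,\mathds{1}_{\{X_2>X_1\}},
\end{align*}
and taking expectations. Expanding the two differences produces four summands of the form $\pm\,\mathbb{E}[X_i\,g(X_1,X_2)\,\mathds{1}_A]$, with $i\in\{1,2\}$ and $A\in\{\{X_1>X_2\},\{X_1<X_2\}\}$.

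\textbf{Step 2 (length-biasing).} Using \eqref{de-cdf-X*} in the form $t\,\mathrm{d}F(t)=\mathbb{E}(X)\,\mathrm{d}F_{X^*}(t)$, a direct application of Fubini's theorem yields
\[
\mathbb{E}\bigl[X_1\,h(X_1,X_2)\bigr]=\mathbb{E}(X)\,\mathbb{E}\bigl[h(X^*,X)\bigr],\qquad
\mathbb{E}\bigl[X_2\,h(X_1,X_2)\bigr]=\mathbb{E}(X)\,\mathbb{E}\bigl[h(X,X^*)\bigr],
\]
for any positive measurable $h$, where on the right-hand sides $X\sim F$ and $X^*\sim F_{X^*}$ are independent.

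\textbf{Step 3 (recombination).} Apply the identities of Step 2 to each of the four summands from Step 1. The indicator $\{X_1>X_2\}$ transforms into $\{X^*>X\}$ whenever $X_1$ is length-biased and into $\{X>X^*\}$ whenever $X_2$ is, and the arguments of $g$ correspondingly become $(X^*,X)$ or $(X,X^*)$. Rewriting $\{X^*>X\}$ as $\{X<X^*\}$ and collecting the four resulting expectations according to the sign of $X^*-X$, the two contributions coming from length-biasing $X_1$ pair up with the two coming from length-biasing $X_2$; this doubling produces exactly the factor $2$ and the right-hand side of the claim.

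\textbf{Main obstacle.} There is no deep difficulty, but the delicate point is the bookkeeping in Steps 1--3: one must keep careful track of (i) which of $X_1,X_2$ carries the linear factor and is therefore length-biased, (ii) how the indicator events transform under that substitution, and (iii) the resulting order of the arguments of $g$. It is safest to carry out the substitutions at the level of explicit double integrals with respect to $f(x_1)\,f(x_2)\,\mathrm{d}x_1\,\mathrm{d}x_2$, so that the replacement of $x_i\,f(x_i)$ by $\mathbb{E}(X)\,\mathrm{d}F_{X^*}(x_i)$ and the attendant renaming can be verified unambiguously before the final recombination.
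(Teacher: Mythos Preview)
Your plan is correct and follows essentially the same route as the paper: split $|X_1-X_2|$ according to sign, pass to the length-biased law via $t\,\mathrm{d}F(t)=\mathbb{E}(X)\,\mathrm{d}F_{X^*}(t)$, and then exploit the i.i.d.\ structure to produce the factor~$2$. The only cosmetic difference is the order of the last two steps---the paper first uses exchangeability of $(X_1,X_2)$ at the level of the double integral to collapse to the region $\{v<u\}$ (obtaining the factor~$2$) and only then length-biases, whereas you length-bias the four summands first and pair them afterwards; your suggestion to carry out the bookkeeping via explicit double integrals is exactly what the paper does.
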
	
\begin{proof}
	From independence of $X_1$ and $X_2$, it follows that
	\begin{align}
		\mathbb{E}[ \vert X_1-&X_2\vert g(X_1,X_2)]
		\nonumber
		\\[0,2cm]
		&=
		\int_{0}^{\infty}
		\left[
		\int_{u}^{\infty}
		(v-u)g(u,v)
		{\rm d}F(v)
		\right]
		{\rm d}F(u)
		+
		\int_{0}^{\infty}
		\left[
		\int_{0}^{u}
		(u-v)g(u,v)
		{\rm d}F(v)
		\right]
		{\rm d}F(u)
		\nonumber
		\\[0,2cm]
		&=
		2
		\int_{0}^{\infty}
		u
		\left[
		\int_{0}^{u}
		g(u,v)
		{\rm d}F(v)
		\right]
		{\rm d}F(u)
		-
		2
		\int_{0}^{\infty}
		\left[
		\int_{0}^{u}
		vg(u,v)
		{\rm d}F(v)
		\right]
		{\rm d}F(u), \label{def-1}
	\end{align}
	where in the last identity we use the fact that $X_1$ and $X_2$ are identically distributed. 
	By using the definition \eqref{de-cdf-X*} of $F_{X^*}$, the  expression in \eqref{def-1} can be written as
	\begin{align*}
		&2\mathbb{E}(X)
		\int_{0}^{\infty}
		\left[
		\int_{0}^{u}
		g(u,v)
		{\rm d}F(v)
		\right]
		{\rm d}F_{X^*}(u)
		-
		2\mathbb{E}(X)
		\int_{0}^{\infty}
		\left[
		\int_{0}^{u}
		g(u,v)
		{\rm d}F_{X^*}(v)
		\right]
		{\rm d}F(u).
	\end{align*}
	Hence the proof readily follows.
\end{proof}

\begin{proposition}\label{main-theorem-1}
	Under the conditions of Lemma \ref{main-theorem}, if  $g$ is symmetric, that is, $g(u,v)=g(v,u)$, $u,v>0$, then
	\begin{align*}
		\mathbb{E}[ \vert X_1-X_2\vert g(X_1,X_2)]
		=
		2\mathbb{E}(X)\left\{
		2\mathbb{E}\left[g(X^*,X) \mathds{1}_{\{X<X^*\}}\right]
		-
		\mathbb{E}\left[g(X^*,X)\right] 	
		+
		\mathbb{E}
		\left[g(X^*,X)\mathds{1}_{\{X=X^*\}}\right] 	
		\right\}.
	\end{align*}
\end{proposition}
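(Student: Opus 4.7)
The plan is to show that Proposition \ref{main-theorem-1} is essentially a bookkeeping consequence of Lemma \ref{main-theorem} combined with symmetry of $g$ and the partition of the sample space induced by comparing $X$ and $X^*$.

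First, I would apply Lemma \ref{main-theorem} directly and use the symmetry hypothesis $g(X,X^*)=g(X^*,X)$ to rewrite the second expectation, so that
\begin{align*}
\mathbb{E}[ \vert X_1-X_2\vert g(X_1,X_2)]
=
2\mathbb{E}(X)\Bigl\{
\mathbb{E}\bigl[g(X^*,X)\mathds{1}_{\{X<X^*\}}\bigr]
-
\mathbb{E}\bigl[g(X^*,X)\mathds{1}_{\{X^*<X\}}\bigr]
\Bigr\}.
\end{align*}

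Next, I would decompose $\mathbb{E}[g(X^*,X)]$ along the partition $\{X<X^*\}\cup\{X^*<X\}\cup\{X=X^*\}$, giving
\begin{align*}
\mathbb{E}\bigl[g(X^*,X)\mathds{1}_{\{X^*<X\}}\bigr]
=
\mathbb{E}[g(X^*,X)]
-
\mathbb{E}\bigl[g(X^*,X)\mathds{1}_{\{X<X^*\}}\bigr]
-
\mathbb{E}\bigl[g(X^*,X)\mathds{1}_{\{X=X^*\}}\bigr].
\end{align*}
Substituting this into the previous display immediately yields the claimed expression, after collecting the two copies of $\mathbb{E}[g(X^*,X)\mathds{1}_{\{X<X^*\}}]$ into a factor of $2$.

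There is no real obstacle here; the proof is a one-line algebraic manipulation once symmetry is used and the total expectation is split by the three-way partition. The only mild subtlety worth flagging is retaining the degenerate term $\mathbb{E}[g(X^*,X)\mathds{1}_{\{X=X^*\}}]$, which is kept for generality since nothing in the hypotheses rules out an atom in $F$ that would make $\mathbb{P}(X=X^*)>0$; in the absolutely continuous case (as for gamma mixtures later in the paper) that term vanishes and the formula simplifies accordingly.
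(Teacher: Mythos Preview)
Your proposal is correct and is exactly the ``immediate'' argument the paper has in mind; the paper itself omits the proof entirely, simply stating that it is immediate. Your use of Lemma~\ref{main-theorem}, the symmetry substitution $g(X,X^*)=g(X^*,X)$, and the three-way partition $\{X<X^*\}\cup\{X^*<X\}\cup\{X=X^*\}$ is precisely the intended route.
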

\begin{proof}
	The proof is immediate, therefore omitted.
\end{proof}

\begin{proposition}\label{corollary-1}
	Under the conditions of Proposition \ref{main-theorem-1}, we have
	\begin{align}\label{idd-1}
		\mathbb{E}(\vert X_1-X_2\vert)
		=
		2\mathbb{E}(X)\left[
		2\mathbb{P}(X<X^*)
		-
		1
		+
		\mathbb{P}(X=X^*)
		\right].
	\end{align}
	Equivalently,
	\begin{align}\label{idd-2}
		\mathbb{E}(\vert X_1-X_2\vert)
		=
		2\mathbb{E}(X)
		\left[
		2F_{X\over X+X^*}\left({1\over 2}\right)
		-
		1
		+
		\mathbb{P}(X=X^*)
		\right],
	\end{align}
	with $F_{X/(X+X^*)}$ being the cumulative distribution function of the independent ratio ${X/(X+X^*)}$.
\end{proposition}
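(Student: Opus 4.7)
The plan is to derive \eqref{idd-1} as an immediate specialization of Proposition \ref{main-theorem-1} and then to recast $\mathbb{P}(X<X^*)$ as a value of a cumulative distribution function via the map $(X,X^*)\mapsto X/(X+X^*)$.

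First, I would apply Proposition \ref{main-theorem-1} with the trivially symmetric choice $g\equiv 1$. Under this choice, the symmetry hypothesis $g(u,v)=g(v,u)$ is vacuous, and each expectation appearing in the right-hand side collapses to a probability:
\begin{align*}
\mathbb{E}\bigl[g(X^*,X)\mathds{1}_{\{X<X^*\}}\bigr]=\mathbb{P}(X<X^*),\quad
\mathbb{E}[g(X^*,X)]=1,\quad
\mathbb{E}\bigl[g(X^*,X)\mathds{1}_{\{X=X^*\}}\bigr]=\mathbb{P}(X=X^*).
\end{align*}
Substituting into the identity of Proposition \ref{main-theorem-1} produces \eqref{idd-1} at once.

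Next, for \eqref{idd-2}, I would exploit the almost-sure positivity of $X$ and $X^*$: for fixed $X^*>0$ the map $t\mapsto t/(t+X^*)$ is strictly increasing on $(0,\infty)$ with value $1/2$ precisely at $t=X^*$, so
\begin{align*}
\{X<X^*\}=\left\{\frac{X}{X+X^*}<\frac{1}{2}\right\},\qquad \{X=X^*\}=\left\{\frac{X}{X+X^*}=\frac{1}{2}\right\}.
\end{align*}
Writing $Y:=X/(X+X^*)$, this gives $\mathbb{P}(X\le X^*)=F_Y(1/2)$, hence $\mathbb{P}(X<X^*)=F_Y(1/2)-\mathbb{P}(Y=1/2)=F_Y(1/2)-\mathbb{P}(X=X^*)$. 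Plugging this into \eqref{idd-1} and rearranging the atomic term at $1/2$ then yields \eqref{idd-2}.

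The only point that requires a bit of care is the bookkeeping at the boundary event $\{X=X^*\}$, i.e.\ the atom of $Y$ at $1/2$; one must be consistent about the CDF convention (right-continuity versus strict inequality) so that the correction terms balance. In the continuous setting that governs the gamma and gamma-mixture applications later in the paper one has $\mathbb{P}(X=X^*)=0$, so the two formulations collapse to the clean identity $\mathbb{E}|X_1-X_2|=2\mathbb{E}(X)\bigl[2F_{X/(X+X^*)}(1/2)-1\bigr]$, and the general form merely keeps track of the possible atomic mass.
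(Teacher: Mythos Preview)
Your proposal is correct and mirrors the paper's own proof: specialize the preceding result with $g\equiv 1$ to obtain \eqref{idd-1}, then pass from $\{X<X^*\}$ to $\{X/(X+X^*)<1/2\}$ (the paper phrases this as $\{X<X^*\}=\{2X<X+X^*\}$) to reach \eqref{idd-2}. Your extra care about the atom at $1/2$ and the CDF convention is well placed---the paper is silent on this point and effectively identifies $F_{X/(X+X^*)}(1/2)$ with $\mathbb{P}(X<X^*)$ directly, which is exact only under a left-continuous convention or when $\mathbb{P}(X=X^*)=0$ (the latter holding in all the paper's applications).
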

\begin{proof}
	By taking, in Lemma \ref{main-theorem}, $g$ as the unitary constant function, we obtain \eqref{idd-1}. Identity in \eqref{idd-2} follows from the equality of events $\{X<X^*\}$ and $\{2X<X+X^*\}$.
\end{proof}

The Gini coefficient is defined using the Lorenz curve, which plots cumulative income share against population percentage. The line of perfect equality is a 45-degree line. The Gini coefficient is calculated as the ratio of the area between the Lorenz curve and the line of equality to the total area under the line. To derive explicit expressions, we use the standard definition based on the mean difference.
\begin{definition}
	The Gini coefficient \citep{Gini1936} of a random variable $X$ with finite mean $\mathbb{E}(X)$ is defined as
\begin{align}\label{Gini-coefficient}
	G={1\over 2}\, {\mathbb{E}(\vert X_1-X_2\vert)\over\mathbb{E}(X)},
\end{align}
where $X_1$ and $X_2$ are independent copies of $X$.
\end{definition}

\begin{proposition}\label{corollary-2}
	Under the conditions of Proposition \ref{corollary-1}, the Gini coefficient for $X$ is given by
	\begin{align*}
		G
		=
		2F_{X\over X+X^*}\left({1\over 2}\right)
		-
		1
		+
		\mathbb{P}(X=X^*),
	\end{align*}
	with $F_{X/(X+X^*)}$ being the cumulative distribution function of the independent ratio ${X/(X+X^*)}$ and $X^*$ distributed according to \eqref{de-cdf-X*}.
\end{proposition}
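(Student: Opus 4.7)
\medskip

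\noindent\textbf{Proof proposal.} The plan is to derive the claimed identity by a one-step substitution, combining the definition \eqref{Gini-coefficient} of the Gini coefficient with the second expression \eqref{idd-2} for $\mathbb{E}(|X_1-X_2|)$ established in Proposition \ref{corollary-1}. No additional estimate or technical lemma is required.

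Concretely, I would start from
\begin{align*}
G = \frac{1}{2}\,\frac{\mathbb{E}(|X_1-X_2|)}{\mathbb{E}(X)},
\end{align*}
which is valid since $X$ has finite mean and $X_1,X_2$ are independent copies of $X$. Then I would plug in the identity \eqref{idd-2},
\begin{align*}
\mathbb{E}(|X_1-X_2|) = 2\mathbb{E}(X)\left[2F_{X/(X+X^*)}\!\left(\tfrac{1}{2}\right) - 1 + \mathbb{P}(X=X^*)\right],
\end{align*}
and cancel the factor $2\mathbb{E}(X)$ in numerator and denominator. This yields the asserted formula immediately.

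Since all the nontrivial work, namely the passage from $\mathbb{E}(|X_1-X_2|)$ to an expression involving the length-biased variable $X^*$ and the ratio $X/(X+X^*)$, has already been carried out in Lemma \ref{main-theorem} and Proposition \ref{corollary-1}, there is no genuine obstacle here; the only thing to check is that $\mathbb{E}(X)>0$ so that the division is legitimate, which is automatic from the almost-sure positivity of $X$. I would therefore expect the written proof to be a two-line computation, and it would be reasonable simply to state that it follows from \eqref{Gini-coefficient} together with \eqref{idd-2}.
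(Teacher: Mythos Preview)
Your proposal is correct and matches exactly what the paper does: Proposition \ref{corollary-2} is presented without proof, as it follows immediately by substituting \eqref{idd-2} into the definition \eqref{Gini-coefficient} and cancelling the factor $2\mathbb{E}(X)$. Your remark that $\mathbb{E}(X)>0$ is needed for the division is a fair sanity check, and indeed this is guaranteed by the almost-sure positivity of $X$ assumed throughout.
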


We proceed by defining the gamma mixture distribution for the case of a constant rate parameter, $\lambda > 0$. While our results can be extended to the more general case of non-constant rate parameters, our primary goal is to investigate the bias in the estimation of the Gini coefficient for a gamma mixture distribution, and we demonstrate that this can be achieved by focusing on the constant $\lambda$ scenario.
\begin{definition}\label{def-mixture}
	A random variable $X$ has a gamma mixture distribution \citep{Kitani2024} with parameter vector $\boldsymbol{\theta}=(\pi_1,\ldots,\pi_m,\alpha_1,\ldots,\alpha_m,\lambda)^\top$, denoted by $X\sim \text{GM}(\boldsymbol{\theta})$, if it has the following mixture density:
	\begin{align*}
		f(x;\boldsymbol{\theta})
		=
		\sum_{i=1}^{m}
		\pi_i g_i(x;\alpha_i,\lambda), 
		\quad x>0,
	\end{align*}
	where $m\in\mathbb{N}$ denotes the number of components, $\pi_i$ denotes the mixing proportion such that $\sum_{i=1}^{m}
	\pi_i=1$, $\pi_i> 0$, and	$g_i$ is the density of the gamma distribution with shape parameter $\alpha_i > 0$ and rate parameter $\lambda > 0$, that is,
	\begin{align}\label{def-g-pdf}
		g_i(x;\alpha_i,\lambda)
		=
		{\lambda^{\alpha_i}\over\Gamma(\alpha_i)} \, x^{\alpha_i-1}\exp(-\lambda x), 
		\quad x>0,
	\end{align}
	where $\Gamma(\cdot)$ denotes the (complete) gamma function.
\end{definition}

The gamma mixture distribution is a comprehensive framework that encompasses a broad range of important distributions, as illustrated in Figure \ref{Figure-1} \citep[adaptation of Figure 1 of ][]{Kitani2024}.
\begin{figure}[htb!]
	\vspace{-0.25cm}
	\centering
	{\includegraphics[height=12.0cm,width=16.0cm]{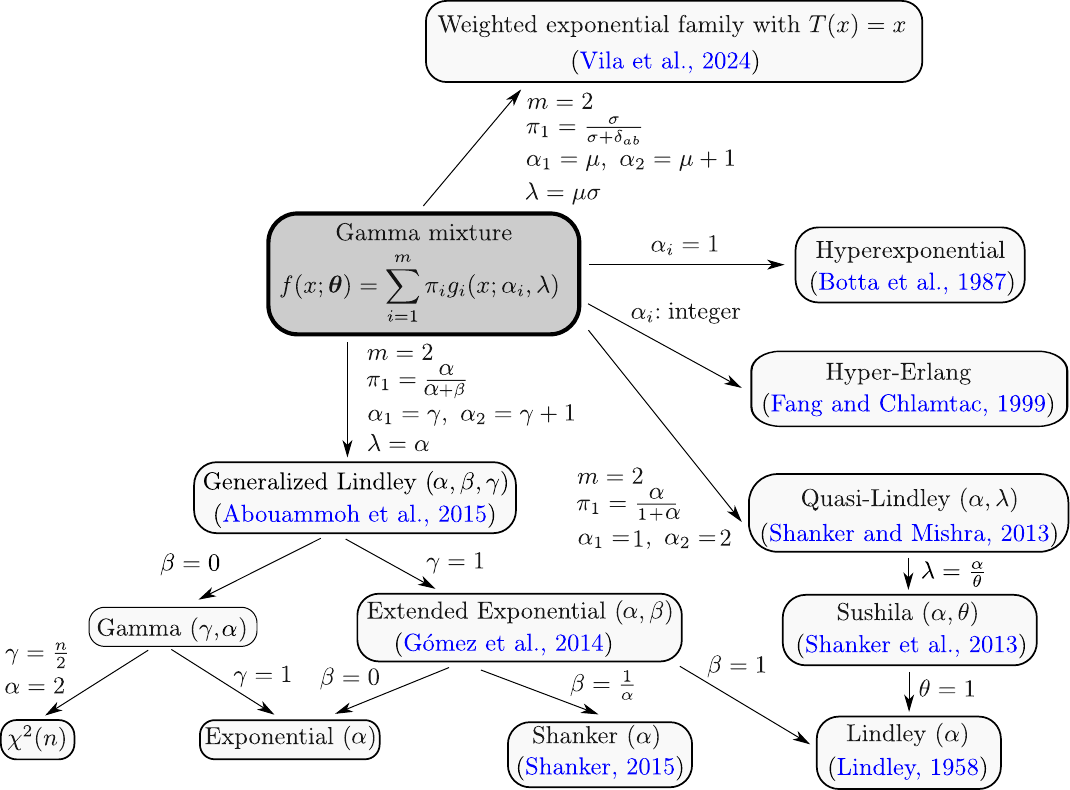}}
	\vspace{-0.2cm}
	\caption{Gamma mixture-type distribution relationships.}
	\label{Figure-1}
\end{figure}

\begin{example}\label{gm-example}
	If $X\sim \text{GM}(\boldsymbol{\theta})$, then $X^*\sim \text{GM}(\boldsymbol{\theta}^*)$, where $\boldsymbol{\theta}^*=(\pi^*_1,\ldots,\pi^*_m,\alpha_1+1,\ldots,\alpha_m+1,\lambda)^\top$, and
	\begin{align}\label{pi-star}
		\pi^*_j
		\equiv
		{
			\pi_j \alpha_j
			\over 		
			\sum_{k=1}^{m} {\pi_k\alpha_k}
		},
		\quad 
		j=1,\ldots,m.
	\end{align}

	As $X$ and $X^*$ are independent, by Jacobian method the density of the ratio ${X/(X+X^*)}$ satisfies:
	\begin{align}\label{id-inte}
		f_{{X\over X+X^*}}(z)
		&=
		(s+1)^2\int_{0}^{\infty}xf(x;\boldsymbol{\theta})f(sx;\boldsymbol{\theta}^*){\rm d}x,
		\quad 
		s={1\over z}-1,
		\ 0<z<1.
	\end{align}
	By using the definitions of $f(x;\boldsymbol{\theta})$ and $f(sx;\boldsymbol{\theta}^*)$ in \eqref{id-inte},  
	\begin{align*}
		f_{{X\over X+X^*}}(z)
		&=
		(s+1)^2
		\sum_{i,j=1}^{m}
		\pi_i \pi_j^*
		{\lambda^{\alpha_i+\alpha_j+1} \over\Gamma(\alpha_i) \Gamma(\alpha_j+1)} \,
		s^{\alpha_j}
		\int_{0}^{\infty}  
		x^{\alpha_i+\alpha_j}
		\exp[-\lambda(1+ s)x]
		{\rm d}x
		\\[0,2cm]
		&=
		\sum_{i,j=1}^{m}
		\pi_i \pi_j^*
		\,
		{z^{\alpha_i-1}(1-z)^{\alpha_j}\over \text{B}(\alpha_i, \alpha_j+1)},
		\quad 
		0<z<1,
	\end{align*}
	where  $\text{B}(\cdot,\cdot)$ is the (complete)  beta function.
	That is, the distribution of ${X/(X+X^*)}$ is a mixture of beta distributions with parameters $\alpha_i$ and $\alpha_j+1$, $i,j=1,\ldots,m$.
	Then, the cumulative distribution function of ${X/(X+X^*)}$ can be written as 
	\begin{align*}
		F_{{X\over X+X^*}}(z)
		=
		\sum_{i,j=1}^{m}
		\pi_i \pi_j^*
		\,
		{\text{B}(z;\alpha_i,\alpha_j+1)\over \text{B}(\alpha_i, \alpha_j+1)}
		=
		\sum_{i,j=1}^{m}
		\pi_i \pi_j^*
		\,
		{_2F_1\left(\alpha_i,-\alpha_j;\alpha_i+1;z\right)\over \alpha_i \text{B}(\alpha_i, \alpha_j+1)} \,z^{\alpha_i},
		\quad 
		0<z<1,
	\end{align*}
	where we have used the well-known identity $\text{B}(x;a,b)=(x^a/a) \,_2F_1(a,1-b;a+1;x)$ that relates the incomplete beta function $\text{B}(x;a,b)$ to the hypergeometric function
	$_2F_1(a,b;c;x)$.
	Applying Proposition \ref{corollary-2}, the Gini coefficient of the GM distribution can be expressed as \begin{align}\label{Gini-GM}
		G=
		\sum_{i,j=1}^{m}
		\pi_i \pi_j^*
		\,
		{_2F_1\left(\alpha_i,-\alpha_j;\alpha_i+1;{1\over 2}\right)\over 2^{\alpha_i-1}\alpha_i \text{B}(\alpha_i, \alpha_j+1)} 
		-
		1,
	\end{align}
	because $\mathbb{P}(X=X^*)=0$.
\end{example}

\begin{example}
	By taking $m=1$, $\pi_1=\pi_1^*=1$ and $\alpha_1=\alpha_2=\alpha$  in \eqref{Gini-GM}, we have
	\begin{align}\label{gini-1}
		G=
		{_2F_1\left(\alpha,-\alpha;\alpha+1;{1\over 2}\right)\over 2^{\alpha-1}\alpha B(\alpha, \alpha+1)}  
		-
		1.
	\end{align}
	By using the identity \citep{WolframResearch2024}:
	\begin{align*}
		\,_{2}F_{1}\left(a,-a;c;{1\over 2}\right)
		=
		{\sqrt{\pi}\Gamma(c)\over 2^c}
		\left[
		{1\over\Gamma({a+c+1\over 2})\Gamma({c-a\over 2})}
		+
		{1\over\Gamma({a+c\over 2})\Gamma({c-a+1\over 2})}
		\right],
	\end{align*}
	the Gini coefficient \eqref{gini-1} becomes
	\begin{align}\label{gini-2}
		G
		=
		{\Gamma(2\alpha+1)\over \alpha^2 2^{2\alpha} \Gamma^2(\alpha)} 
		+
		{2^{1-2\alpha}\sqrt{\pi}\Gamma(2\alpha)\over\Gamma(\alpha)\Gamma(\alpha+{1\over 2})} 
		-
		1
		=
		{\Gamma(2\alpha+1)\over \alpha^2 2^{2\alpha} \Gamma^2(\alpha)},
	\end{align}
	where in the last equality we have used the 
	Legendre duplication formula \citep{Abramowitz1972}:
	$
	\Gamma(x)\Gamma\left(x+{1/ 2}\right)=2^{1-2x}\sqrt{\pi}\Gamma(2x).
	$
	Again, by applying Legendre duplication formula in \eqref{gini-2}, we get the following expression
	\begin{align*}
		G
		=
		{\Gamma(\alpha+{1\over 2})\over \sqrt{\pi} \alpha\Gamma(\alpha)},
	\end{align*}
	which is consistent with the well-known formula for the Gini coefficient of the gamma distribution, as reported in the existing literature \citep[see, for example,][]{McDonald1979}.
\end{example}

\section{The main result}\label{sec:03}


Utilizing the foundational results presented in Section \ref{sec:02}, is this part, we obtain a simple closed-form expression (Theorem \ref{the-main-2}) for the expected value of the Gini coefficient estimator $\widehat{G}$, initially proposed by \cite{Deltas2003},
\begin{align}\label{gini-estimadtor-def}
	\widehat{G}
	=
	{1\over n-1} 
	\left[\dfrac{\displaystyle\sum_{1\leqslant i<j\leqslant n}\vert X_i-X_j\vert}{\displaystyle\sum_{i=1}^{n} X_i}\right],
	\quad 
	n\in\mathbb{N}, \, n\geqslant 2,
\end{align}
where $X_1, X_2,\ldots,X_n$ are i.i.d. observations from the population.

The following theorem is valid only for (almost surely) positive random variables and its proof adapts similar technical steps as reference \cite{Baydil2025}.
\begin{theorem}\label{the-main-2}
	Let $X_1,X_2,\ldots$ be independent copies of a (almost surely)
	positive random variable $X$ with finite integral and common cumulative distribution function $F$.
	The following holds: 
	\begin{align*}
		\mathbb{E}(\widehat{G})
		=
		n
		\mathbb{E}(X)
		\left\{2R_{1}(F)-R_{\infty}(F)		+
		\mathbb{E}
		\left[g(X^*,X)\mathds{1}_{\{X=X^*\}}\right]\right\},
		\quad 
		n\in\mathbb{N}, \, n\geqslant 2,
	\end{align*}
	where $X^*$ has length-biased distribution (see Lemma \ref{main-theorem}), 
	$R_{1}(F)\equiv \lim_{\varepsilon\to 1}R_{\varepsilon}(F)$, 
	$R_{\infty}(F)\equiv \lim_{\varepsilon\to \infty}R_{\varepsilon}(F)$,
	\begin{align}\label{def-R-function}
		&R_{\varepsilon}(F)
		\equiv
		\int_{0}^{\infty}
		\mathbb{E}\left[
		\exp\left(-xX^*\right)
		H(x,\varepsilon X^*)
		\right]
		\mathscr{L}^{n-2}_F(x){\rm d}x,
		\quad 
		\varepsilon>0,
		\\[0,2cm]
		&
		H(x,x^*)
		\equiv
		\int_{0}^{x^*}
		\exp\left(-x u\right) 
		{\rm d}F(u),
		\quad x, x^*>0,
		\label{def-H-function}
		\\[0,2cm]
		&
		{g}(u,v)
		\equiv 
		\int_{0}^{\infty}\exp\left\{-(u+v) x\right\} \mathscr{L}^{n-2}_F(x){\rm d}x,
		\quad u,v>0,
		\label{def-g}
	\end{align}
	and
	$\mathscr{L}_F(p)=\int_{0}^{\infty}\exp(-pu) {\rm d}F(u)$ is the Laplace transform corresponding to distribution $F$.
	In the above, we are assuming that the Lebesgue-Stieltjes integrals and improper integrals involved exist.
\end{theorem}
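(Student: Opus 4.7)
The strategy is the one indicated by \cite{Baydil2025}: replace the reciprocal of the sample total by a Laplace integral, decouple the $n-2$ spectator observations, and reduce $\mathbb{E}(\widehat{G})$ to an expectation of the form $\mathbb{E}[|X_1-X_2|\,g_x(X_1,X_2)]$ with a symmetric exponential kernel $g_x$. Proposition \ref{main-theorem-1} can then be invoked directly, and the three resulting terms are matched with $R_1(F)$, $R_\infty(F)$ and $\mathbb{E}[g(X^*,X)\mathds{1}_{\{X=X^*\}}]$.

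First I would use exchangeability of $X_1,\ldots,X_n$ to collapse the double sum: each unordered pair contributes identically to the expectation, so
\begin{align*}
\mathbb{E}(\widehat{G})
=\frac{1}{n-1}\binom{n}{2}\,\mathbb{E}\!\left[\frac{|X_1-X_2|}{\sum_{i=1}^n X_i}\right]
=\frac{n}{2}\,\mathbb{E}\!\left[\frac{|X_1-X_2|}{\sum_{i=1}^n X_i}\right].
\end{align*}
Since $X_i>0$ almost surely, the elementary identity $1/y=\int_0^\infty e^{-xy}\,{\rm d}x$, Tonelli's theorem and independence of $X_3,\ldots,X_n$ from $(X_1,X_2)$ give
\begin{align*}
\mathbb{E}(\widehat{G})
=\frac{n}{2}\int_0^\infty \mathbb{E}\!\left[|X_1-X_2|\,e^{-x(X_1+X_2)}\right]\,\mathscr{L}_F^{\,n-2}(x)\,{\rm d}x.
\end{align*}

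Next I would apply Proposition \ref{main-theorem-1} with the symmetric kernel $g_x(u,v)=e^{-x(u+v)}$, which yields
\begin{align*}
\mathbb{E}\!\left[|X_1-X_2|\,e^{-x(X_1+X_2)}\right]
=2\mathbb{E}(X)\Bigl\{2\mathbb{E}\!\left[e^{-x(X+X^*)}\mathds{1}_{\{X<X^*\}}\right]-\mathbb{E}\!\left[e^{-x(X+X^*)}\right]+\mathbb{E}\!\left[e^{-x(X+X^*)}\mathds{1}_{\{X=X^*\}}\right]\Bigr\}.
\end{align*}
Substituting this in and exchanging the expectation with the $x$-integral once more, the three terms must be identified with the quantities in the statement. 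Using independence of $X$ and $X^*$, the definition of $H$ gives $\mathbb{E}[e^{-xX^*}H(x,\varepsilon X^*)]=\mathbb{E}[e^{-x(X+X^*)}\mathds{1}_{\{X<\varepsilon X^*\}}]$, so
\begin{align*}
R_\varepsilon(F)=\int_0^\infty \mathbb{E}\!\left[e^{-x(X+X^*)}\mathds{1}_{\{X<\varepsilon X^*\}}\right]\,\mathscr{L}_F^{\,n-2}(x)\,{\rm d}x.
\end{align*}
Passing $\varepsilon\to 1$ identifies the first contribution with $2R_1(F)$; monotone convergence as $\varepsilon\to\infty$ identifies $R_\infty(F)$ with the second term $\int_0^\infty\mathbb{E}[e^{-x(X+X^*)}]\,\mathscr{L}_F^{\,n-2}(x)\,{\rm d}x$; and the third term is, by the definition of $g$ in \eqref{def-g} and a final Tonelli swap, precisely $\mathbb{E}[g(X^*,X)\mathds{1}_{\{X=X^*\}}]$. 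The overall prefactor collapses to $\tfrac{n}{2}\cdot 2\mathbb{E}(X)=n\mathbb{E}(X)$, giving the claimed formula.

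The one genuinely technical point is justifying the two uses of Tonelli's theorem together with the limits in $\varepsilon$. Positivity of $X$ and $X^*$ and non-negativity of the exponential kernel reduce each interchange to Tonelli, while the standing hypothesis that the improper integrals in \eqref{def-R-function}--\eqref{def-g} exist, combined with $\mathbb{E}(X)<\infty$, ensures that every quantity in sight is finite and that the two monotone limits $\varepsilon\to 1$ and $\varepsilon\to\infty$ are legitimate. Once this bookkeeping is done, the calculation is a direct translation of the gamma-case argument in \cite{Baydil2025}.
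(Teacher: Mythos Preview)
Your argument is correct and matches the paper's own proof almost step for step: the Laplace representation of the reciprocal, the decoupling of the $n-2$ spectator variables into $\mathscr{L}_F^{\,n-2}(x)$, the application of Proposition~\ref{main-theorem-1}, and the identification of the three terms with $R_1(F)$, $R_\infty(F)$ and $\mathbb{E}[g(X^*,X)\mathds{1}_{\{X=X^*\}}]$ are exactly what the paper does. The only cosmetic difference is that the paper first forms the integrated kernel $g(u,v)$ of \eqref{def-g} and applies Proposition~\ref{main-theorem-1} once to that $g$, whereas you apply the proposition pointwise in $x$ to $g_x(u,v)=e^{-x(u+v)}$ and then integrate; the two orderings are equivalent by Tonelli, so this is not a genuinely different route.
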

\begin{proof}
		%
		%
		%
	%
	By using the very well-known identity
	\begin{align*}
		{z}
		\int_{0}^{\infty}\exp(-z x){\rm d}x
		=
		1,
	\end{align*}
	with $z=\sum_{i=1}^{n} X_i$, and by taking advantage of the independent and identically distributive nature of $X_1,X_2,\ldots$, we have
	\begin{align}
		\mathbb{E}\left[\dfrac{\displaystyle\sum_{1\leqslant i<j\leqslant n}\vert X_i-X_j\vert}{\displaystyle\sum_{i=1}^{n} X_i}\right]
		&=
		\mathbb{E}\left[\sum_{1\leqslant i<j\leqslant n}\vert X_i-X_j\vert \int_{0}^{\infty}\exp\left\{-\left(\displaystyle\sum_{i=1}^{n} X_i\right) x\right\}{\rm d}x\right]
		\nonumber
		\\[0,2cm]
		&=
		\mathbb{E}\left[\sum_{1\leqslant i<j\leqslant n}\vert X_i-X_j\vert \int_{0}^{\infty}\exp\left\{-(X_1+X_2) x\right\}\exp\left\{-\left(\displaystyle\sum_{i=3}^{n} X_i\right) x\right\}{\rm d}x\right]
		\nonumber
		\\[0,2cm]
		&=
		\binom{n}{2}
		\mathbb{E}\left[
		\vert X_1-X_2\vert 
		\int_{0}^{\infty}\exp\left\{-(X_1+X_2) x\right\} \mathscr{L}^{n-2}_F(x){\rm d}x
		\right].
		\label{eq-1}
	\end{align}
	By taking ${g}(u,v)$ as in \eqref{def-g},
	the identity in \eqref{eq-1} can be written as
	\begin{align*}
		\mathbb{E}\left[\dfrac{\displaystyle\sum_{1\leqslant i<j\leqslant n}\vert X_i-X_j\vert}{\displaystyle\sum_{i=1}^{n} X_i}\right]
		=
		\binom{n}{2}
		\mathbb{E}\left[
		\vert X_1-X_2\vert 
		{g}(X_1,X_2)
		\right].
	\end{align*}
	As ${g}$ in \eqref{def-g} is symmetric, by applying Proposition \ref{main-theorem-1}, the above identity becomes
	\begin{align}\label{exp-end}
		\mathbb{E}(\widehat{G})
		=
		n\mathbb{E}(X)\left\{
		2\mathbb{E}\left[g(X^*,X) \mathds{1}_{\{X<X^*\}}\right]
		-
		\mathbb{E}\left[g(X^*,X)\right]	
		+
		\mathbb{E}
		\left[g(X^*,X)\mathds{1}_{\{X=X^*\}}\right] 
		\right\}.
	\end{align}

	Now, note that
	\begin{align}\label{eq-1-1}
		\mathbb{E}\left[{g}(X^*,X) \mathds{1}_{\{X<X^*\}}\right]
		&=
		\int_{0}^{\infty}
		\mathbb{E}\left[
		\exp\left\{-(X^*+X) x\right\} 
		\mathds{1}_{\{X<X^*\}}\right]
		\mathscr{L}^{n-2}_F(x){\rm d}x \nonumber
		\\[0,2cm]
		&=
		\int_{0}^{\infty}
		\mathbb{E}\left[
		\exp\left(-xX^*\right)
		\int_{0}^{\infty}
		\exp\left(-x u\right) 
		\mathds{1}_{\{u<X^*\}}
		{\rm d}F(u)
		\right]
		\mathscr{L}^{n-2}_F(x){\rm d}x
	\end{align}
	and
	\begin{align} \label{eq-2}
		\mathbb{E}\left[{g}(X^*,X)\right]
		&=
		\int_{0}^{\infty}
		\mathbb{E}\left[
		\exp\left\{-(X^*+X) x\right\} 
		\right]
		\mathscr{L}^{n-2}_F(x){\rm d}x \nonumber
		\\[0,2cm]
		&=
		\int_{0}^{\infty}
		\mathbb{E}\left[
		\exp\left(-xX^*\right) 
		\int_{0}^{\infty}
		\exp\left(-x u\right) 
		{\rm d}F(u)
		\right]
		\mathscr{L}^{n-2}_F(x){\rm d}x.
	\end{align}
	By replacing \eqref{eq-1-1} and \eqref{eq-2} in formula \eqref{exp-end}, and by using the notations in \eqref{def-R-function} and \eqref{def-H-function}, the proof of the theorem readily follows.
\end{proof}

\begin{remark}\label{remark-main}
	Note that the Laplace transform corresponding to distribution $F$ can be obtained from $H(\cdot,\cdot)$ in \eqref{def-H-function} as follows:
	\begin{align*}
		\mathscr{L}_F(x)
		=
		\lim_{\varepsilon\to\infty}
		H(x,\varepsilon x^*), \quad x, x^*>0.
	\end{align*}
\end{remark}

As an immediate consequence of Theorem \ref{the-main-2}, the following  result follows.
\begin{corollary}\label{the-main-3}
	Under the conditions of Theorem \ref{the-main-2}, with $X$ being a (almost surely) positive, absolutely continuous random variable with finite integral and common cumulative distribution function $F$, we have
	\begin{align*}
		\mathbb{E}(\widehat{G})
		=
		n
		\mathbb{E}(X)
		\left[2R_{1}(F)-R_{\infty}(F)\right],
		\quad 
		n\in\mathbb{N}, \, n\geqslant 2.
	\end{align*}
\end{corollary}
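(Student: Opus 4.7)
The plan is to leverage Theorem \ref{the-main-2} directly and show that the extra boundary term $\mathbb{E}[g(X^*,X)\mathds{1}_{\{X=X^*\}}]$ vanishes under the absolute continuity hypothesis, which will collapse the expression to the desired form.

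First, I would recall the conclusion of Theorem \ref{the-main-2},
\begin{align*}
\mathbb{E}(\widehat{G})
=
n\mathbb{E}(X)\bigl\{2R_{1}(F)-R_{\infty}(F)+\mathbb{E}\bigl[g(X^*,X)\mathds{1}_{\{X=X^*\}}\bigr]\bigr\},
\end{align*}
so the task reduces to showing $\mathbb{E}[g(X^*,X)\mathds{1}_{\{X=X^*\}}]=0$. The second step is to observe that when $X$ is absolutely continuous with density $f$, the length-biased random variable $X^*$ defined via \eqref{de-cdf-X*} is also absolutely continuous, with density $u f(u)/\mathbb{E}(X)$; in particular, for every fixed $u>0$ one has $\mathbb{P}(X^*=u)=0$.

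The third step is to condition on $X$ and use independence: since $X$ and $X^*$ are independent, Fubini gives
\begin{align*}
\mathbb{P}(X=X^*)
=
\int_{0}^{\infty}\mathbb{P}(X^*=u)\,\mathrm{d}F(u)
=
0.
\end{align*}
The function $g(X^*,X)$ defined in \eqref{def-g} is nonnegative and, by the standing integrability assumption in Theorem \ref{the-main-2}, finite (in fact $g(u,v)\le\int_0^\infty\mathscr{L}_F^{n-2}(x)\,\mathrm{d}x$, which is finite for $n\ge 2$ under the standing hypotheses). Hence the product $g(X^*,X)\mathds{1}_{\{X=X^*\}}$ is zero almost surely, so its expectation is zero.

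I do not anticipate any real obstacle here; the only point that needs a brief justification is the absolute continuity of $X^*$, which follows directly from formula \eqref{de-cdf-X*}, and the fact that independence of two absolutely continuous random variables forces $\mathbb{P}(X=X^*)=0$. Substituting this zero back into the identity from Theorem \ref{the-main-2} yields
\begin{align*}
\mathbb{E}(\widehat{G})
=
n\mathbb{E}(X)\bigl[2R_{1}(F)-R_{\infty}(F)\bigr],
\end{align*}
which is the desired conclusion.
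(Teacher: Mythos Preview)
Your proof is correct and follows exactly the route the paper has in mind: the corollary is stated there as an ``immediate consequence'' of Theorem \ref{the-main-2}, the sole point being that absolute continuity forces $\mathbb{P}(X=X^*)=0$ and hence kills the extra term. (Your parenthetical bound on $g$ actually fails for $n=2$, since $\mathscr{L}_F^{0}\equiv 1$ gives $\int_0^\infty 1\,{\rm d}x=\infty$; but this remark is unnecessary anyway---once $\mathds{1}_{\{X=X^*\}}=0$ a.s., the product is $0$ a.s.\ with the standard convention $0\cdot\infty=0$.)
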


		\section{The Gini coefficient estimator is biased for gamma mixtures}\label{sec:04}

		In this section, we show that the estimator  $\widehat{G}$ in \eqref{gini-estimadtor-def} of the Gini coefficient $G$  (given in Example \ref{gm-example}) for a gamma mixture population is biased, whereas sampling from a gamma-distributed population eliminates this bias, as previously established by \cite{Baydil2025}.

		Indeed, if $X\sim \text{GM}(\boldsymbol{\theta})$ (see Definition \ref{def-mixture}), then, $H(\cdot,\cdot)$ in \eqref{def-H-function} can be written as
		%
		%
		%
		%
		\begin{align}\label{mix-1}
			H(x,\varepsilon x^*)
			&=
			\sum_{i=1}^{m}		
			\pi_i 
			\int_{0}^{\varepsilon x^*}
			\exp\left(-x u\right) 
			g_i(u;\alpha_i,\lambda)
			{\rm d}u,
			\quad x, x^*,\varepsilon>0,
			\nonumber
			\\[0,2cm]
			&=
			\sum_{i=1}^{m}		
			\pi_i \,
			{\lambda^{\alpha_i}\over\Gamma(\alpha_i)} 
			\int_{0}^{\varepsilon x^*}
			u^{\alpha_i-1}
			\exp[-(x+\lambda)u]
			{\rm d}u
			\nonumber
			\\[0,2cm]
			&=
			\sum_{i=1}^{m}		
			\pi_i \,
			\left({\lambda\over x+\lambda}\right)^{\alpha_i}
			{\gamma(\alpha_i,(x+\lambda)\varepsilon x^*)\over \Gamma(\alpha_i)},
		\end{align}	
		with $g_i(\cdot;\alpha_i,\lambda)$ being as in \eqref{def-g}.
		By using Remark \ref{remark-main}, from \eqref{mix-1}, we obtain
		\begin{align}\label{laplace-transform}
			\mathscr{L}_F(x)
			=
			\lim_{\varepsilon\to\infty}
			H(x,\varepsilon x^*)
			=
			\sum_{i=1}^{m}		
			\pi_i \,
			\left({\lambda\over x+\lambda}\right)^{\alpha_i},
			\quad x, x^*>0,
		\end{align}
		where the well-known identity $\Gamma(x)=\lim_{y\to\infty}\gamma(x,y)$ has been used. Furthermore, since $X^*\sim \text{GM}(\boldsymbol{\theta}^*)$ with $\boldsymbol{\theta}^*$ being as in Example \ref{gm-example}, by using \eqref{mix-1}, we have
		\begin{align*}
			\mathbb{E}[
			\exp\left(-xX^*\right) &
			H(x,\varepsilon X^*)
			]
			=
			\sum_{i=1}^{m}		
			\pi_i \,
			\left({\lambda\over x+\lambda}\right)^{\alpha_i}
			{1\over \Gamma(\alpha_i)} \,
			\mathbb{E}\left[
			\exp\left(-xX^*\right)
			\gamma(\alpha_i,(x+\lambda)\varepsilon X^*)
			\right]
			\\[0,2cm]
			&=
			\sum_{i,j=1}^{m}		
			\pi_i \pi_j^*\,
			\left({\lambda\over x+\lambda}\right)^{\alpha_i}
			{\lambda^{\alpha_j+1}\over \Gamma(\alpha_i)\Gamma(\alpha_j+1)}
			\int_0^\infty
			v^{\alpha_j}
			\exp\left[-(x+\lambda) v\right]
			\gamma(\alpha_i,(x+\lambda)\varepsilon v)
			{\rm d}v
			\\[0,2cm]
			&=
			\sum_{i,j=1}^{m}		
			\pi_i \pi_j^*\,
			{\lambda^{\alpha_j+1}\over \alpha_i B(\alpha_i,\alpha_j+1)} 
			\left({\lambda\over x+\lambda}\right)^{\alpha_i}
			{(x+\lambda)^{\alpha_i}\varepsilon^{\alpha_i}\over [(x+\lambda)\varepsilon+x+\lambda]^{\alpha_i+\alpha_j+1}}
			\\[0,2cm]
			&\times 
			\,_{2}F_{1}\left(\alpha_i+\alpha_j+1,1;\alpha_i+1;{\varepsilon\over \varepsilon+1}\right),
		\end{align*}	
		where 
		in the last line the identity \citep{DAurizio2016}:
		\begin{align*}
			\int_0^\infty 
			x^{a-1}
			\exp(-sx)\gamma(b,\theta x){\rm d}x
			=
			{\theta^b\Gamma(a+b)\over b(s+\theta)^{a+b}}
			\,_{2}F_{1}\left(a+b,1;b+1;{\theta\over s+\theta}\right),
		\end{align*}
		has been used.
		
		Hence, $R_{\varepsilon}(F)$ in \eqref{def-R-function} can be written as
		\begin{align}\label{r-varepsilon}
			R_{\varepsilon}(F)
			&=
			\int_{0}^{\infty}
			\mathbb{E}\left[
			\exp\left(-xX^*\right)
			H(x,\varepsilon X^*)
			\right]
			\mathscr{L}^{n-2}_F(x){\rm d}x
			\nonumber
			\\[0,2cm]
			&=
			\sum_{i,j=1}^{m}		
			\pi_i \pi_j^*\,
			{\lambda^{\alpha_j+1}\over \alpha_i B(\alpha_i,\alpha_j+1)} 
			\,_{2}F_{1}\left(\alpha_i+\alpha_j+1,1;\alpha_i+1;{\varepsilon\over \varepsilon+1}\right)
			\nonumber
			\\[0,2cm]
			&\times 
			\int_{0}^{\infty}
			{{\lambda}^{\alpha_i} \varepsilon^{\alpha_i}\over [(x+\lambda)\varepsilon+x+\lambda]^{\alpha_i+\alpha_j+1}} 
			\mathscr{L}^{n-2}_F(x)
			{\rm d}x.
		\end{align}
		Furthermore, by using, in \eqref{r-varepsilon}, the Euler's Hypergeometric transformation
		\citep{Abramowitz1972}:
		${}_{2}F_{1}(a,b;c;x)=(1-x)^{c-a-b}{}_{2}F_{1}(c-a,c-b;c;x)$, we have 
		\begin{align}\label{r-varepsilon-1}
			R_{\varepsilon}(F)
			=
			\sum_{i,j=1}^{m}		
			\pi_i \pi_j^*\,
			{\, 
				{}_{2}F_{1}\left(-\alpha_j,\alpha_i;\alpha_i+1;{\varepsilon\over \varepsilon+1}\right)\over \alpha_i B(\alpha_i,\alpha_j+1)} 
			\left({\varepsilon\over\varepsilon+1}\right)^{\alpha_i}
			\int_{0}^{\infty}
			\left({\lambda\over x+\lambda}\right)^{\alpha_i+\alpha_j+1} 
			\mathscr{L}^{n-2}_F(x)
			{\rm d}x.
		\end{align}

		%
		
		Now, by taking $\varepsilon\to 1$ in  \eqref{r-varepsilon-1},  from formula \eqref{laplace-transform} of $\mathscr{L}_F(x)$,
		we get
		\begin{align}\label{R1-1}
			&R_{1}(F)
			=
			\sum_{i,j=1}^{m}		
			\pi_i \pi_j^*\,
			{				{}_{2}F_{1}\left(-\alpha_j,\alpha_i;\alpha_i+1;{1\over 2}\right)\over 2^{\alpha_i}\alpha_i B(\alpha_i,\alpha_j+1)} 
			\int_{0}^{\infty}
			\left({\lambda\over x+\lambda}\right)^{\alpha_i+\alpha_j+1}
			\mathscr{L}^{n-2}_F(x)
			{\rm d}x
			\nonumber
			\\[0,2cm]
			&=
			\sum_{i,j=1}^{m}		
			\pi_i \pi_j^*\,
			{				{}_{2}F_{1}\left(-\alpha_j,\alpha_i;\alpha_i+1;{1\over 2}\right)\over 2^{\alpha_i}\alpha_i B(\alpha_i,\alpha_j+1)} 
			\sum_{1\leqslant i_1,\ldots,i_{n-2}\leqslant m}	
			\pi_{i_1}\cdots\pi_{i_{n-2}}
			\int_{0}^{\infty}
			\left({\lambda\over x+\lambda}\right)^{\alpha_{i_1}+\cdots+ \alpha_{i_{n-2}}+\alpha_i+\alpha_j+1}
			{\rm d}x
			\nonumber
			\\[0,2cm]
			&=
			\lambda
			\sum_{i,j=1}^{m}		
			\pi_i \pi_j^*\,
			{				{}_{2}F_{1}\left(-\alpha_j,\alpha_i;\alpha_i+1;{1\over 2}\right)\over 2^{\alpha_i}\alpha_i B(\alpha_i,\alpha_j+1)} 
			\sum_{1\leqslant i_1,\ldots,i_{n-2}\leqslant m}	
			\pi_{i_1}\cdots\pi_{i_{n-2}}
			\,
			{1\over \alpha_{i_1}+\cdots+ \alpha_{i_{n-2}}+\alpha_i+\alpha_j}.
		\end{align}
		
		On the other hand, by setting $\varepsilon\to\infty$ in  \eqref{r-varepsilon-1}, from formula \eqref{laplace-transform} of $\mathscr{L}_F(x)$ and from definition \eqref{pi-star} of $\pi_j^*$, we obtain
		\begin{align}\label{Ri}
			R_{\infty}(F)	
			&=
			\sum_{i,j=1}^{m}		
			\pi_i \pi_j^*\,
			{ \,_{2}F_{1}\left(-\alpha_j,\alpha_i;\alpha_i+1;1\right) \over \alpha_i B(\alpha_i,\alpha_j+1)} 
			\int_{0}^{\infty}
			\left({\lambda\over x+\lambda}\right)^{\alpha_i+\alpha_j+1}
			\mathscr{L}^{n-2}_F(x)
			{\rm d}x
			\nonumber
			\\[0,2cm]
			&=
			{1\over \sum_{k=1}^{m}		
				{\pi_k\alpha_k}}
			\sum_{j=1}^{m}		
			\pi_j\alpha_j
			\int_{0}^{\infty}
			\left({\lambda\over x+\lambda}\right)^{\alpha_j+1}
			\mathscr{L}^{n-1}_F(x)
			{\rm d}x
			\nonumber
			\\[0,2cm]
			&=
			{1\over \sum_{k=1}^{m}		
				{\pi_k\alpha_k}}
			\sum_{j=1}^{m}		
			\pi_j\alpha_j
			\sum_{1\leqslant i_1,\ldots,i_{n-1}\leqslant m}	
			\pi_{i_1}\cdots\pi_{i_{n-1}}
			\int_{0}^{\infty}
			\left({\lambda\over x+\lambda}\right)^{\alpha_{i_1}+\cdots+\alpha_{i_{n-1}}+\alpha_j+1}
			{\rm d}x
			\nonumber
			\\[0,2cm]
			&=
			{1\over \sum_{k=1}^{m}		
				{\pi_k\alpha_k\over\lambda}}
			\sum_{1\leqslant i_1,\ldots,i_{n-1}\leqslant m}	
			\pi_{i_1}\cdots\pi_{i_{n-1}} \,
			\sum_{j=1}^{m}		
			{\pi_j\alpha_j 
				\over 
				\alpha_{i_1}+\cdots+\alpha_{i_{n-1}}+\alpha_j},
		\end{align}
		where in the second equality the identity $ \,_{2}F_{1}\left(-\alpha_j,\alpha_i;\alpha_i+1;1\right) = \alpha_i B(\alpha_i,\alpha_j+1)$ has been used.
		
		Then, by using \eqref{R1-1} and \eqref{Ri}, from Corollary \ref{the-main-3}, we get
		\begin{align}
			\mathbb{E}(\widehat{G})
			&=
			n
			\mathbb{E}(X)
			\left[2R_{1}(F)-R_{\infty}(F)\right]
			\nonumber
			\\[0,2cm]
			&=
			\sum_{i,j=1}^{m}		
			\pi_i \pi_j^*\,
			{				{}_{2}F_{1}\left(-\alpha_j,\alpha_i;\alpha_i+1;{1\over 2}\right)\over 2^{\alpha_i-1}\alpha_i B(\alpha_i,\alpha_j+1)} 
			\left[
			\sum_{1\leqslant i_1,\ldots,i_{n-2}\leqslant m}	
			\pi_{i_1}\cdots\pi_{i_{n-2}}
			\,
			{n\sum_{k=1}^{m}		
				{\pi_k\alpha_k}\over \alpha_{i_1}+\cdots+ \alpha_{i_{n-2}}+\alpha_i+\alpha_j}
			\right]
			\nonumber
			\\[0,2cm]
			&-
			\sum_{1\leqslant i_1,\ldots,i_{n-1}\leqslant m}	
			\pi_{i_1}\cdots\pi_{i_{n-1}} \,
			\sum_{j=1}^{m}		
			{n\pi_j\alpha_j 
				\over 
				\alpha_{i_1}+\cdots+\alpha_{i_{n-1}}+\alpha_j}.
				\label{exp-final}
		\end{align}

		Therefore, by \eqref{exp-final} and \eqref{Gini-GM}, the bias of $\widehat{G}$ relative to $G$, denoted by $\text{Bias}(\widehat{G},G)$, can be written as
		{\small
			\begin{align}\label{eq:bias_correc_gini}
				\text{Bias}(\widehat{G},G)
				&=
				\sum_{i,j=1}^{m}		
				\pi_i \pi_j^*\,
				{				{}_{2}F_{1}\left(-\alpha_j,\alpha_i;\alpha_i+1;{1\over 2}\right)\over 2^{\alpha_i-1}\alpha_i B(\alpha_i,\alpha_j+1)} 
				\left[
				\sum_{1\leqslant i_1,\ldots,i_{n-2}\leqslant m}	
				\pi_{i_1}\cdots\pi_{i_{n-2}}
				\,
				{n\sum_{k=1}^{m}		
					{\pi_k\alpha_k}\over \alpha_{i_1}+\cdots+ \alpha_{i_{n-2}}+\alpha_i+\alpha_j}
				-1
				\right]
				\nonumber
				\\[0,2cm]
				&
				+
				\left[
				1-
				\sum_{1\leqslant i_1,\ldots,i_{n-1}\leqslant m}	
				\pi_{i_1}\cdots\pi_{i_{n-1}} \,
				\sum_{j=1}^{m}		
				{n\pi_j\alpha_j 
					\over 
					\alpha_{i_1}+\cdots+\alpha_{i_{n-1}}+\alpha_j}
				\right].
			\end{align}
		}

		\begin{remark}
			Note that $\text{Bias}(\widehat{G},G)=0$ if and only $\alpha_1=\cdots=\alpha_m=\alpha$, that is, the estimator $\widehat{G}$ is unbiased  when sampling from a gamma distribution \citep{Baydil2025}.
		\end{remark}

\section{Illustrative simulation study}\label{sec:05}

Note that a bias-corrected Gini estimator can then be proposed from \eqref{Gini-GM} and \eqref{eq:bias_correc_gini} as
\begin{align}\label{Gini-GM-BC} \nonumber
\widehat{G}_{\text{bc}} &= \widehat{G} - \text{Bias}(\widehat{G}, G)\\[0,2cm] \nonumber
&=
	{1\over n-1}
	\left[\dfrac{\displaystyle\sum_{1\leqslant i<j\leqslant n}\vert X_i-X_j\vert}{\displaystyle\sum_{i=1}^{n} X_i}\right] \\ \nonumber
&-
	\sum_{i,j=1}^{m}
\widehat\pi_i \widehat\pi_j^*\,
{				{}_{2}F_{1}\left(-\widehat\alpha_j,\widehat\alpha_i;\widehat\alpha_i+1;{1\over 2}\right)\over 2^{\widehat\alpha_i-1}\widehat\alpha_i B(\widehat\alpha_i,\widehat\alpha_j+1)}
\left[
\sum_{1\leqslant i_1,\ldots,i_{n-2}\leqslant m}
\widehat\pi_{i_1}\cdots\widehat\pi_{i_{n-2}}
\,
{n\sum_{k=1}^{m}
	{\widehat\pi_k\widehat\alpha_k}\over \widehat\alpha_{i_1}+\cdots+ \widehat\alpha_{i_{n-2}}+\widehat\alpha_i+\widehat\alpha_j}
	-1
\right]
\\[0,2cm]
&
-
\left[
1-
\sum_{1\leqslant i_1,\ldots,i_{n-1}\leqslant m}
\widehat\pi_{i_1}\cdots\widehat\pi_{i_{n-1}} \,
\sum_{j=1}^{m}
{n\widehat\pi_j\widehat\alpha_j
	\over
	\widehat\alpha_{i_1}+\cdots+\widehat\alpha_{i_{n-1}}+\widehat\alpha_j}
	\right], 	\quad
	n\in\mathbb{N}, \, n\geqslant 2,
\end{align}
where hat notation on the mixing proportion and shape parameters denotes the maximum likelihood estimators. Here, we perform a Monte Carlo simulation to evaluate the behaviour of the bias-corrected Gini estimator in \eqref{Gini-GM-BC}. We consider a mixture of two gamma distributions with parameters
$(\pi_1,\pi_2,\alpha_1,\alpha_2,\lambda)$. The simulation scenario considers the following setting: sample size $n \in \{10, 11, 12, 13, 14, 15, 16\}$, mixing proportions $\pi_1=0.60$ and $p_2=0.40$, shape parameters $\alpha_1=0.5$ and
$\alpha_2\in\{0.5,1.0,2.0,3.0,5.0\}$, and the rate parameter is set to $\lambda = 1$. The steps of the Monte Carlo simulation study are described in Algorithm 1.

\begin{algorithm}
\caption{Monte Carlo simulation for bias-corrected Gini estimator.}
\begin{algorithmic}[1]
\State \textbf{Input:} Number of simulations $N_{\text{sim}}=100$, sample sizes $n \in \{10, \dots, 16\}$,
         true parameters $(\pi_1, \pi_2) = (0.60, 0.40)$, $\alpha_1=0.5$, $\alpha_2\in\{0.5,1.0,2.0,3.0,5.0\}$, and $\lambda = 1$.
\State \textbf{Output:} Gini estimate and its bias-corrected version.

\For{each sample size $n \in \{10, \dots, 16\}$ (or $\alpha_1=0.5$, $\alpha_2\in\{0.5,1.0,2.0,3.0,5.0\}$)}
    \For{each simulation run $s = 1, \dots, N_{\text{sim}}$}
        \State \textbf{Step 1:} Generate data
        \State Generate class labels $z \sim \text{Categorical}(\pi_1, \pi_2)$.
        \State Sample data $X$ from a gamma mixture:
        \[
        X_i \sim \sum_{k=1}^{2} \pi_k  \text{Gamma}(\alpha_k, \lambda).
        \]
        \State \textbf{Step 2:} Estimate gamma mixture parameters using the maximum likelihood method
        \State Compute $\widehat{\pi}_1$, $\widehat{\pi}_2$, $\widehat{\alpha}_1$, $\widehat{\alpha}_2$ and $\widehat{\lambda}$ by optimizing the likelihood function.
        \State \textbf{Step 3: Compute Gini coefficient estimates}
        \State Compute $\widehat{G}$ using the standard formula for the Gini coefficient \eqref{Gini-GM}.
        \State Compute bias correction term $\text{Bias}(\widehat{G}, G)$ using \eqref{eq:bias_correc_gini}.
        \State Compute bias-corrected estimate using \eqref{Gini-GM-BC}, that is,
        \[
        \widehat{G}_{\text{bc}} = \widehat{G} - \text{Bias}(\widehat{G}, G).
        \]
    \EndFor
    \State \textbf{Step 4: Compute Monte Carlo Averages}
    \State Compute mean of $\widehat{G}$ and $\widehat{G}_{\text{bc}}$.
\EndFor
\State \textbf{Return:} Averages of $\widehat{G}$ and $\widehat{G}_{\text{bc}}$.
\end{algorithmic}\label{algorithm:01}
\end{algorithm}

Figure \ref{fig:gini_comparison1} shows the average values of the standard and bias-corrected Gini coefficient estimates for different sample sizes $n$, holding $\alpha_1 = 0.5$ and $\alpha_2 = 2.0$ constant. From this figure, we observe that the standard Gini coefficient estimator ($\widehat{G}$, red line) tends to overestimate the Gini coefficient (positive bias). However, as the sample size increases, the bias diminishes slightly, demonstrating a convergence towards the true value, as expected. The bias-corrected estimator ($\widehat{G}_{\text{bc}}$, blue line), on the other hand, consistently produces lower estimates across all sample sizes.

Figure \ref{fig:gini_comparison2} presents the behavior of the Gini coefficient estimator as a function of $\alpha_2$, while keeping $\alpha_1 = 0.5$ and $n = 15$ fixed. From this figure, we observe that as $\alpha_2$ ($>0.5$) increases, the bias in the standard Gini estimator ($\widehat{G}$, red line) becomes more pronounced.

\begin{figure}[!ht]
    \centering
    \includegraphics[scale=0.7]{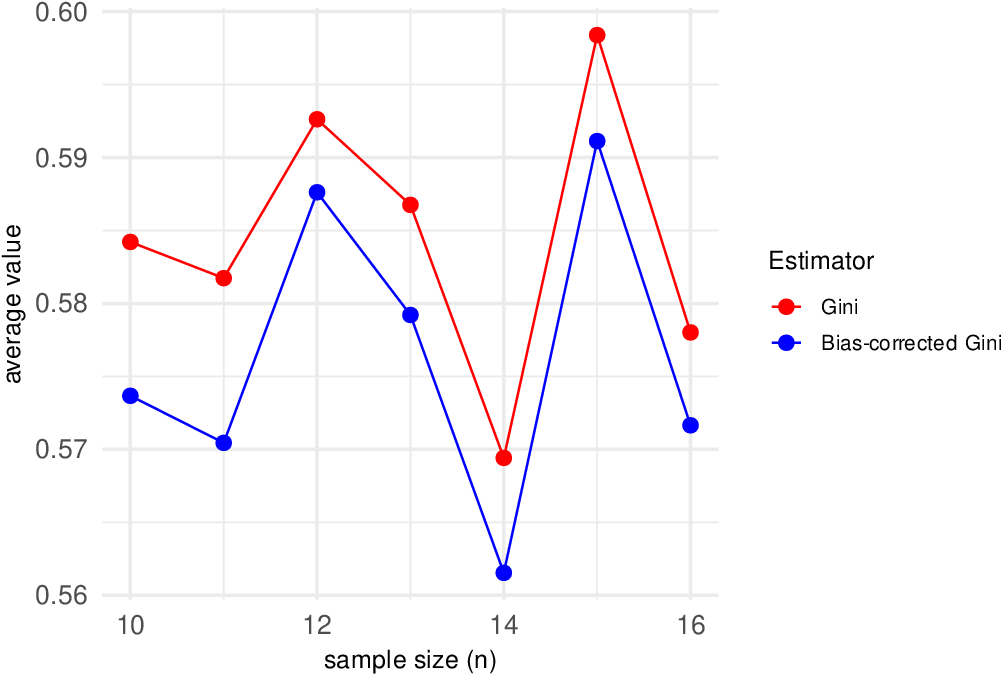}
    \caption{Comparison of standard Gini coefficient ($\widehat{G}$, red line) and bias-corrected Gini ($\widehat{G}_{\text{bc}}$, blue line) for different sample sizes $n$ ($\alpha_1=0.5$ and $\alpha_2= 2.0$).}
    \label{fig:gini_comparison1}
\end{figure}

\begin{figure}[!ht]
    \centering
    \includegraphics[scale=0.7]{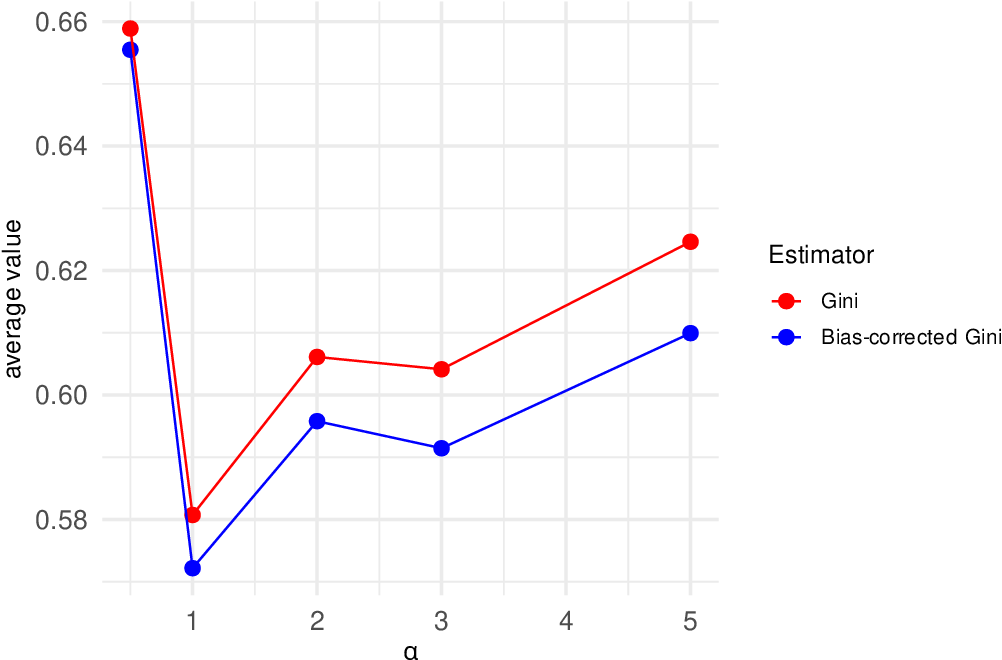}
    \caption{Comparison of standard Gini coefficient ($\widehat{G}$, red line) and bias-corrected Gini ($\widehat{G}_{\text{bc}}$, blue line) for different values of $\alpha_2$ ($\alpha_1=0.5$ and $n=15$).}
    \label{fig:gini_comparison2}
\end{figure}

\clearpage
		
\section{Concluding remarks}\label{sec:06}

In this paper, we have proposed the bias in the estimation of the Gini coefficient for gamma mixture populations. While the sample Gini coefficient is unbiased for single gamma distributions, we have demonstrated both theoretically and empirically that this property does not hold for gamma mixture models. We have derived a bias expression for the Gini coefficient for gamma mixture populations, which allowed us to propose a bias-corrected Gini estimator. An illustrative Monte Carlo simulation study has been carried out to evaluate the behavior of the bias-corrected Gini estimator. The results emphasized the that the standard Gini coefficient estimator exhibits a consistent upward bias when applied to gamma mixture populations, and the proposed bias correction effectively mitigates this issue, providing a more reliable estimator for gamma mixture distributions. As part of future research, it will be of interest to extend the study to multivariate Gini coefficients. Furthermore, alternative correction methods or the analysis to other mixture models can be developed; see \cite{Perez1986}. Work on these problems is currently in progress and we hope to report these findings in future.



	\paragraph*{Acknowledgements}
The research was supported in part by CNPq and CAPES grants from the Brazilian government.
	
	\paragraph*{Disclosure statement}
	There are no conflicts of interest to disclose.



\begin{thebibliography}{}

\bibitem[Abouammoh et~al., 2015]{Abouammoh2015}
Abouammoh, A.~M., Alshangiti, A.~M., and Ragab, I.~E. (2015).
\newblock A new generalized lindley distribution.
\newblock {\em Journal of Statistical Computation and Simulation},
  85:3662--3678.

\bibitem[Abramowitz and Stegun, 1972]{Abramowitz1972}
Abramowitz, M. and Stegun, I.~A. (1972).
\newblock {\em Handbook of Mathematical Functions with Formulas, Graphs, and
  Mathematical Tables}.
\newblock Dover, New York, 9th printing edition.

\bibitem[Baydil et~al., 2025]{Baydil2025}
Baydil, B., de~la Peña, V.~H., Zou, H., and Yao, H. (2025).
\newblock Unbiased estimation of the gini coefficient.
\newblock {\em Statistics and Probability Letters}.

\bibitem[Botta et~al., 1987]{Botta1987}
Botta, R.~F., Harris, C.~M., and Marchal, W.~G. (1987).
\newblock Characterizations of generalized hyperexponential distribution
  functions.
\newblock {\em Communications in Statistics Stochastic Models}, 3:115--148.

\bibitem[Chotikapanich and Griffiths, 2008]{Chotikapanich2008}
Chotikapanich, D. and Griffiths, W.~E. (2008).
\newblock {\em Estimating Income Distributions Using a Mixture of Gamma
  Densities}, pages 285--302.
\newblock Springer New York, New York, NY.

\bibitem[Damgaard and Weiner, 2000]{Damgaard2000}
Damgaard, C. and Weiner, J. (2000).
\newblock Describing inequality in plant size or fecundity.
\newblock {\em Ecology}, 81:1139--1142.

\bibitem[D'Aurizio, 2016]{DAurizio2016}
D'Aurizio, J. (2016).
\newblock Integrating the lower incomplete gamma $\int_0^\infty x^{a-1}e^{-s x}
  \gamma(b,x) \mathrm{d}x$.
\newblock {\em Mathematics Stack Exchange}.

\bibitem[Deltas, 2003]{Deltas2003}
Deltas, G. (2003).
\newblock The small-sample bias of the gini coefficient: Results and
  implications for empirical research.
\newblock {\em Review of Economics and Statistics}, 85:226--234.

\bibitem[Dorfman, 1979]{Dorfman1979}
Dorfman, R. (1979).
\newblock A formula for the gini coefficient.
\newblock {\em The Review of Economics and Statistics}, 61(1):146--149.

\bibitem[Fang and Chlamtac, 1999]{FangandChlamtac1999}
Fang, Y. and Chlamtac, I. (1999).
\newblock Teletraffic analysis and mobility modeling of pcs networks.
\newblock {\em IEEE Transactions on Communications}, 47:1062--1072.

\bibitem[Gini, 1936]{Gini1936}
Gini, C. (1936).
\newblock On the measure of concentration with special reference to income and
  statistics.
\newblock {\em Colorado College Publication, General Series No. 208}, pages
  73--79.

\bibitem[Gómez et~al., 2014]{Gomez2014}
Gómez, Y.~M., Bolfarine, H., and Gómez, H.~W. (2014).
\newblock A new extension of the exponential distribution.
\newblock {\em Revista Colombiana de Estadística}, 37:25--34.

\bibitem[Kharazmi et~al., 2023]{Kharazmi2023}
Kharazmi, E., Bordbar, N., and Bordbar, S. (2023).
\newblock Distribution of nursing workforce in the world using gini
  coefficient.
\newblock {\em BMC Nursing}, 22:151.

\bibitem[Kitani et~al., 2024]{Kitani2024}
Kitani, M., Murakami, H., and Hashiguchi, H. (2024).
\newblock Distribution of the sum of gamma mixture random variables.
\newblock {\em SUT Journal of Mathematics}, 60(1):17--37.

\bibitem[Lindley, 1958]{Lindley1958}
Lindley, D. (1958).
\newblock Fiducial distributions and bayes' theorem.
\newblock {\em Journal of the Royal Statistical Society Series B}, 20:102--107.

\bibitem[McDonald and Jensen, 1979]{McDonald1979}
McDonald, J.~B. and Jensen, B.~C. (1979).
\newblock An analysis of some properties of alternative measures of income
  inequality based on the gamma distribution function.
\newblock {\em Journal of the American Statistical Association}, 74:856--860.

\bibitem[Pérez et~al., 1986]{Perez1986}
Pérez, R., Caso, C., and Gil, M. (1986).
\newblock Unbiased estimation of income inequality.
\newblock {\em Statistical Papers}, 27:227--237.

\bibitem[Salem and Mount, 1974]{Salem1974}
Salem, A. and Mount, T. (1974).
\newblock A convenient descriptive model of income distribution: the gamma
  density.
\newblock {\em Econometrica}, 42:1115--1127.

\bibitem[Shanker, 2015]{Shanker2015}
Shanker, R. (2015).
\newblock Shanker distribution and its applications.
\newblock {\em International Journal of Statistics and Applications},
  5:338--348.

\bibitem[Shanker and Mishra, 2013]{ShankerandMishra2013}
Shanker, R. and Mishra, A. (2013).
\newblock A quasi lindley distribution.
\newblock {\em African Journal of Mathematics and Computer Science Research},
  6:64--71.

\bibitem[Shanker et~al., 2013]{Shanker2013}
Shanker, R., Sharma, S., Shanker, U., and Shanker, R. (2013).
\newblock Sushila distribution and its application to waiting times data.
\newblock {\em Opin. Int. J. Bus. Manag.}, 3:1--11.

\bibitem[Sun et~al., 2010]{SUN2010601}
Sun, T., Zhang, H., Wang, Y., Meng, X., and Wang, C. (2010).
\newblock The application of environmental gini coefficient (egc) in allocating
  wastewater discharge permit: The case study of watershed total mass control
  in tianjin, china.
\newblock {\em Resources, Conservation and Recycling}, 54(9):601--608.

\bibitem[Vila et~al., 2024]{Vila2024}
Vila, R., Nakano, E., and Saulo, H. (2024).
\newblock Novel closed-form point estimators for a weighted exponential family
  derived from likelihood equations.
\newblock {\em Stat}, 13:e723.

\bibitem[Wolfram~Research, 2024]{WolframResearch2024}
Wolfram~Research, I. (2024).
\newblock {\em Mathematica, Version 14.2}.
\newblock Champaign, IL.

\bibitem[Yao, 1999]{YAO01101999}
Yao, S. (1999).
\newblock On the decomposition of gini coefficients by population class and
  income source: a spreadsheet approach and application.
\newblock {\em Applied Economics}, 31(10):1249--1264.

\end{thebibliography}

\end{document}